\documentclass[12pt]{amsart}

\pdfoutput=1

\usepackage[text={420pt,660pt},centering]{geometry}

\usepackage{color}
\usepackage{esint,amssymb}
\usepackage{graphicx}
\usepackage{MnSymbol}
\usepackage{mathtools}
\usepackage[colorlinks=true, pdfstartview=FitV, linkcolor=blue, citecolor=blue, urlcolor=blue,pagebackref=false]{hyperref}
\usepackage{microtype}

\usepackage{bm}
\usepackage{scalerel} 
\usepackage{dsfont}
\usepackage[font={footnotesize}]{caption}

\usepackage[notcite,notref,color]{showkeys}

\usepackage{lipsum}
\definecolor{darkgreen}{rgb}{0,0.5,0}
\definecolor{darkblue}{rgb}{0,0,0.7}
\definecolor{darkred}{rgb}{0.9,0.1,0.1}

\newtheorem{proposition}{Proposition}
\newtheorem{theorem}[proposition]{Theorem}

\theoremstyle{definition}

\newtheorem{problem}[proposition]{Problem}

\newcommand{\cref}[1]{Corollary~\ref{c.#1}}

\numberwithin{equation}{section}
\numberwithin{proposition}{section}






\begin{document}

\title[The ACA and the IRS Fixed Point Iteration]{The Affordable Care Act and the IRS Iterative Fixed Point Procedure}

\begin{abstract}
We model the quantities appearing in Internal Revenue Service (IRS) tax guidance for calculating the health insurance premium tax credit created by the Patient Protection and Affordable Care Act, also called Obamacare. We ask the question of whether there is a procedure, computable by hand, which can calculate the appropriate premium tax credit for any household with self-employment income. We give an example showing that IRS tax guidance, which has had self-employed taxpayers use an iterative fixed point procedure to calculate their premium tax credits since 2014, can lead to a divergent sequence of iterates.
As a consequence, IRS guidance does not calculate appropriate premium tax credits for tax returns in certain income intervals, adversely affecting eligible beneficiaries.
A bisection procedure for calculating premium tax credits is proposed. We prove that this procedure calculates appropriate premium tax credits for a model of simple tax returns; and apparently, this procedure has already been used to prepare accepted tax returns.
We outline the problem of finding a procedure which calculates appropriate premium tax credits for models of general tax returns. While the bisection procedure will work with the tax code in its current configuration, it could fail, in states which have not expanded Medicaid, if a certain deduction were to revert to an earlier form. Future policy objectives might also lead to further problems.
%
\end{abstract}

\author[S. J. Ferguson]{Samuel J. Ferguson}
\address[S. J. Ferguson]{Courant Institute of Mathematical Sciences, New York University, 251 Mercer St., New York, NY 10012}
\email{ferguson@cims.nyu.edu}

\maketitle

\section{Introduction}

In January of 2018, the author took an Uber ride. The author's Uber driver was eligible for a tax credit under the Affordable Care Act, but tax software and government calculators couldn't correctly calculate it. He was going to receive \$0 to help him pay for health insurance, instead of the roughly \$3,000 that the law prescribed. He asked the author to look into the matter, and this led to a mathematical odyssey captured by \emph{Time}'s film crew and a senior writer at \emph{Money} in \cite{TIM}. The author is motivated to write this article on his research by the communication \cite{OCC}, that the IRS will include reference to it in its guidance after publication in a peer-reviewed journal. Then, tax software companies will be able to implement improved procedures without legal liability, relieving the current computational issues affecting beneficiaries of the Affordable Care Act.

\section{The Affordable Care Act's premium tax credit}

The Patient Protection and Affordable Care Act 
\cite{ACA},
%
also called Obamacare, is a federal law of the United States of America, passed in 2010. Among other things, the law makes qualified health insurance ``affordable" for every American household with ``modified income" $M$ in the range (we do not give definitions of ``affordable" or ``modified income" now, so the reader is asked for patience)
\[
F\leq M\leq 4F.
\]
Here, $F$ is the \emph{federal poverty line}, a governmentally-prescribed number\footnote{For the sake of intuition, in the continental United States in 2018, for a household with $n$ people, $F$ is approximately $\text{\$8,000}+n\cdot\text{\$4,000}$. Thus, a household of one person typically has a \emph{federal poverty line} of about \$12,000, and a household of four people typically has a federal poverty line of about \$24,000.} depending on household size, state, and tax year, which adjusts annually according to a specified notion of inflation.

To convert from units of dollars to mathematical quantities without units, we define $m$ by the equation
\[
m=M/F.
\]
Now, how does the law make health insurance affordable when $1\leq m\leq 4$? It does so by creating the following tax device.

\emph{Step 1.} Given $m$ in the interval $[1,4]$, a governmentally-prescribed percentage is specified, depending on $m$. This is the percentage of $M$ that the government expects the household to be able to affordably pay for health insurance. For example, it is typically less than $0.03$ for households near $m=1$, and close to $0.09$ for households near $m=4$, though the function is different each year. We denote the value of this percentage function as $\%(m)$, and $\%(m)$ is in the interval $(0, 0.1)$ for all $m$ in $[1,4]$ so far, though future years could call for higher percentages. This function is monotone increasing and, so far, discontinuous each year. Fortunately, by the grace of Congress, it possesses right continuity on $[1,4]$, independent of the tax year.

\emph{Step 2.} The percentage $\%=\%(m)$ is multiplied by income $M$, and this number, $\%\cdot M$, is the household's \emph{expected contribution} towards health insurance. This is then compared with the (unsubsidized) sum of the costs of benchmark annual health insurance premiums for the household members. Each household member's cost depends on county of residence, age, and smoking habits. The government ``picks up the tab," that is, the government is willing to pay all of the cost of the benchmark health insurance premium which is not covered by the expected contribution. It does so by means of a (refundable) tax credit that can be obtained in advance at the time that premiums must be paid. In fact, the money may be sent directly from the government to the insurance company, so that the household may not even be aware of the (unsubsidized) total cost of the insurance.

\emph{Simplified Example.} Say an unmarried 60-year-old nonsmoking male forms a household of $1$ person in Dutchess County, in the state of New York, in 2018. Suppose $P$ is $\$500$ per month, $F=\$12,000$ per year, $M=\$48,000$ per year in 2018, and $\%(4)=0.09$. Then, his expected contribution is
\[
\%\left(\frac{M}{F}\right)\cdot M = 0.09\cdot \text{\$48,000} = \text{\$4,320},
\]
and the annual benchmark cost is
\[
12\cdot \text{\$500} = \text{\$6,000}.
\]
If he buys the benchmark insurance, then the government pays the rest,
\[
\text{\$6,000}-\text{\$4,320}=\text{\$1,680}.
\]
Thus, \$1,680 is the amount of Obamacare's health insurance premium tax credit for the household.

The full cost of the benchmark insurance premiums is paid with the premium tax credit and the expected contribution, so the household doesn't have to spend more than its expected contribution for this\footnote{It should be noted that a household does not need to buy the benchmark insurance, and can buy other qualified insurance. The government is still willing to pay \$1,680 or the full cost of the chosen insurance, whichever is less (the government cannot pay more than the full cost of insurance). For simplicity, however, we will assume that the benchmark insurance, which is ``the second lowest cost silver plan" on the government exchange for the household's county of residence, is the insurance plan actually purchased.} insurance, which the government, by definition, considers to be affordable for the household. Thus Obamacare, according to government definitions, makes qualified health insurance affordable for all American households with modified incomes $M$ satisfying $F\leq M\leq 4F$.

\section{The Problem}

The reader will have noticed that we have yet to define $M$. There's a reason for that. For households with sufficient income from \emph{self-employment} (an independent contractor, a private tutor, and a driver associated with a ridesharing app are all likely to be considered self-employed), the value of $M$ can be tricky to find. Yet it seems worthwhile to figure this out for self-employed households. In 2014, self-employed workers were ``almost three times more likely'' than other workers obtain health insurance from the government exchanges created by Obamacare, according to
\cite{LMT},
%
so self-employed households form a sizable proportion of beneficiary households. Thus, we are motivated to address any computational issues that they may face; such issues could potentially affect a large number of people.

Say, for simplicity, that all of a household's income comes from a single business generating an ``earned income" from self-employment of $I$. If the household's health insurance is all purchased through this single business (so $I$ is at least as large as the annual insurance cost), then some amount $D$ of that cost can be deducted from $I$ (meaning no federal taxes are paid on the amount deducted). If there are no other sources of so-called ``above-the-line" deductions besides $D$, then the modified income $M$ is then given by the equation
\[
M=I-D.
\]
We could say that the household has a ``simple" tax return in this case, since it only has one source of income and one above-the-line deduction.

We now have two constraints: first,
\[
0\leq D\leq Q,
\]
where $Q$ is the full (unsubsidized) cost of the insurance, since the government doesn't permit more to be deducted than was possible to pay; second,
\[
D+C\leq Q,
\]
where $C$ is the amount of Obamacare's premium tax credit for the household with modified income $M=I-D$. This second constraint comes from the fact that the government doesn't permit more total dollars in deductions and tax credits than was possible to pay. Without this restriction, it might be possible for an enterprising individual to obtain health insurance at a negative cost to themselves, presumably contrary to the wishes of taxpayers.

\begin{problem}
What is a procedure, computable by hand in a reasonable time, that finds the appropriate health insurance deduction $D$ and premium tax credit $C$ for any self-employed household? The federal poverty line $F>0$, annual cost $Q$ of qualified health insurance premiums (equal to the benchmark cost), earned self-employment income $I\geq Q$, and monotone increasing, right continuous percentage function $\%:[1,4]\to (0,0.1)$ are given.
\end{problem}

What does ``appropriate" mean, above? The appropriate choice of $D$ and $C$ is the choice which maximizes the tax benefit for the beneficiary. That is, we wish to find $D$ and $C$ for which the solution of the maximization problem
\[
\max_{0\leq D\leq Q\ :\ C=C(D),\ D+C\leq Q}\left(C+\overline{t}D\right)
\]
is attained. Here, $C(D)=Q-\%\left(\tfrac{I-D}{F}\right)\cdot \left(I-D\right)$ if $1\leq\tfrac{I-D}{F}\leq 4$ and $C(D)=0$ if $\tfrac{I-D}{F}>4$, as no assistance is given to households with modified incomes beyond the $4F$ threshold. The quantity $\overline{t}$ is the effective income tax rate that would be paid on the amount $D$ of income if it were not deducted, so
\[
\overline{t}=\frac{T(I)-T(I-D)}{I-\left(I-D\right)},
\]
where, for each year and household, $T$ is a monotone-increasing function which assigns, to a given amount, the federal income tax liability for that income.

What does ``computable by hand in a reasonable time" mean, above? It means that the Internal Revenue Service (IRS) could put it into its tax guidance. Informally, that would mean that the IRS does not consider it overly onerous to require of any American of sound mind, even if removed from the conveniences of modern technology. So, for example, if we try all possible values (rounded to the nearest dollar) for $D$ and $C$ that satisfy the constraints, some pair of values will yield the maximum tax benefit, and so will give appropriate $D$ and $C$. But the IRS would likely consider having to try all possible pairs of constrained values $(D, C)$ to be an overly burdensome computational task for an American unable to access a computer, smartphone, or calculator. Thus, this solution, although guaranteed to succeed mathematically, is not a procedure that any household can ``compute by hand in a reasonable time," so it is not a proposed answer to the question. Likewise, the constrained maximization problem that leads to appropriate $D$ and $P$ can be converted, given sufficient knowledge of the function $\%(m)$, to some algebraic equation in $D$ for each taxpayer, since $\%(m)$ is a piecewise-polynomial function and the constraints are easily-analyzed inequalities. However, it would not be reasonable to require an American who has never learned algebra, and is removed from technology, to spend sufficient time to discover the necessary algebra for the solution of the equation (and the associated numerical computation of, say, square roots) independently. So the task is to create an algorithm or procedure which can be implemented in (not too many) steps that just involve addition, subtraction, multiplication, and division. Also, it is desirable to avoid using too many specific details about the discontinuous function $\%(m)$, as it changes each year, and it would be best to have a consistent procedure, independent of the tax year.

If we can solve the above problem, then it is natural to generalize the problem to arbitrary tax returns, which are not necessarily simple. We can then check whether our solution for simple tax returns, appropriately generalized, will work for all tax returns. We turn now to current IRS guidance for self-employed taxpayers who qualify for both a health insurance deduction $D$ and a premium tax credit $C$. This guidance might be viewed as an attempted solution of this question.

\section{The IRS Iterative Fixed Point Procedure}

Current IRS guidance asks self-employed Obamacare beneficiaries to calculate their tax credit and deduction as the rounded $x$- and $y$-coordinates of the limit of
\[
(C_{1}, D_{1}), (C_{2}, D_{2}), (C_{3}, D_{3}),\dots,
\]
points in $\mathbb{R}^{2}$ obtained via a fixed-point iteration. The iteration is defined by
\[
(C_{n+1}, D_{n+1})=(C(D_{n}),Q-C(D_{n}))
\]
for integers $n\geq 1$, starting from an initial condition $(C_{1}, D_{1})$ given by $C_{1}=\$0$ and $D_{1}=Q$. The sequence \emph{converges in the IRS sense} if and only if, when rounding to the nearest penny after each intermediate calculation, there exists a positive integer $N$ such that
\[
\|(C_{M}, D_{M})-(C_{N}, D_{N})\|_{\infty}<\varepsilon_{0}
\]
for all integers $M>N$, with $\varepsilon_{0}=\$1$, where $\|(x,y)\|_{\infty}=\max(|x|,|y|)$.
When the sequence fails to converge in the IRS sense, IRS guidance suggests that beneficiaries accept $D_{2}$ as their health insurance deduction and $C_{3}$ as their premium tax credit, in what IRS guidance calls the ``simplified procedure." The best of the available tax software appears to extend the simplified procedure, allowing taxpayers to take at most
\[
D_{0}:=\liminf_{n\to\infty}D_{n}
\]
as their deduction, and $C(D_{0})$ as their premium tax credit. In the case that we don't have convergence in the IRS sense, however, that value for $C$ is always too small, and often \$0. Such inappropriate values are what tax software and government calculators give now. Unsurprisingly, the IRS says that self-employed taxpayers ``may have difficulty'' computing the tax credit, according to
\cite{IRS},
%
the IRS document which introduced these procedures. However, IRS guidance actually says that \emph{any} legal method may be used to find the right amount, although it gives no method which always works to compute appropriate values, as our examples will show. This leaves room for a new procedure.

\emph{Example.} Say we are considering the 2018 tax year. Then the function $\%(m)$ is defined by
\[
\%(m)=\begin{cases}
j, & 1\leq m < 1.33,\\
k+(\ell-k)\tfrac{m-1.33}{1.5-1.33}, & 1.33\leq m<1.5,\\
\ell+(a-\ell)\tfrac{m-1.5}{2-1.5}, & 1.5\leq m< 2,\\
a+(b-a)\tfrac{m-2}{2.5-2}, & 2\leq m< 2.5,\\
b+(c-b)\tfrac{m-2.5}{3-2.5}, & 2.5\leq m < 3,\\
c, & 3\leq m\leq 4,
\end{cases}
\]
where
\[
(j,k,\ell, a, b, c)=(0.0201, 0.0302, 0.0403, 0.0634, 0.0810, 0.0956).
\]
In particular, we can use this to calculate
\[
C(d)=\begin{cases}
Q-\%(\frac{I-d}{F})\cdot (I-d), & 1\leq\frac{I-d}{F}\leq 4,\\
0, & \frac{I-d}{F}>4.
\end{cases}
\]
Say we have a household in Brooklyn, New York, consisting of one parent and one dependent child between 26 and 29 years of age. The household's relevant federal poverty line is $F=\text{\$16,240}$. Looking up prices for the county, Kings County, we find that the (unsubsidized) cost of benchmark health insurance premiums for the household is \$808.07 per month or, rounding to the nearest dollar, $Q=\text{\$9,697}$ annually. Suppose that the household, altogether, has earned self-employment income from a single business which amounts to $I=\text{\$71,150}$. Following the IRS iterative fixed point procedure, and rounding to the nearest dollar in intermediate steps for simplicity, we obtain
\[
(C_{1}, D_{1})=(\$0, \text{\$9,697})
\]
and
\[
C_{2}=\text{\$9,697}-0.0956\cdot\text{\$61,453},
\]
as \text{\$71,150}-\text{\$9,697}=\text{\$61,453}.
Hence, $C_{2}$ is about \$3,822. Thus,
\[
D_{2}=\text{\$9,697}-\text{\$3,822}=\text{\$5,865}.
\]
In turn, this makes $I-D_{2}=\text{\$65,285}>4F=\text{\$64,960}$, so by our above formula for $C(d)$, we get
\[
C_{3}=0.
\]
Unfortunately, this yields
\[
D_{3}=\text{\$9,697},
\]
putting us back where we started. Hence, the sequence doesn't converge in the IRS sense. On the other hand, if we follow IRS simplified procedures, or the extension found in tax software, we arrive at a premium tax credit of \$0, and a deduction of $D_{2}=\$\text{5,865}$. This is even worse than choosing not to take the premium tax credit at all, and letting $D=\$\text{9,697}$. It turns out that the $\$0$ value for the premium tax credit is not appropriate. If we narrow our search for the deduction $D$ somewhat at the beginning, and perform bisection, then we can do better.

\section{The Proposed Bisection Procedure}

We now propose a bisection procedure, and we prove that it always gives the appropriate Obamacare premium tax credit. The proof will work because, although there may, in general, be multiple discontinuities in the underlying structures that affect potential computations, we do still have left continuity in the function $C=C(D)$, before rounding (and we will neglect rounding in stating and proving our theorem). In other words, we can adapt the proof of the Intermediate Value Theorem which uses bisection, in order to calculate the unknown quantities $D$ and $C$ in a stable way. Note that the following theorem does place restrictions on certain parameters, but the solutions when parameters fall outside of these ranges may be given by simple formulas.

\begin{theorem}
Suppose $F>0$, $Q\geq 0$, $I\geq Q$ are given real numbers, and $\%:[1,4]\to (0, 0.1)$ is a monotone increasing right continuous function. Define the function $C=C(d)$ on $[0,\min(Q,I-F)]$ by letting $C(d)=Q-\%(\frac{I-d}{F})\cdot (I-d)$ if $1\leq \frac{I-d}{F}\leq 4$ and $C(d)=0$ if $\frac{I-d}{F}>4$. If $I$ is in the interval $[F, 4F+Q]$, then let
\[
A_{1}=\max(0,I-4F)
\]
and
\[
B_{1}=\min(Q,I-F).
\]
Suppose also that $Q>\%(4)\cdot 4F$ and $I$ is in $(F+Q-\%(1)\cdot F,4F+Q-\%(4)\cdot 4F]$, so that $A_{1}+C(A_{1})\leq Q$ and $B_{1}+C(B_{1})>Q$. For each positive integer $n$, having obtained $A_{n}$ and $B_{n}$, consider their midpoint
\[
E_{n}=\frac{A_{n}+B_{n}}{2}.
\]
If $E_{n}+C(E_{n})\leq Q$, then let $A_{n+1}=E_{n}$ and $B_{n+1}=B_{n}$. Otherwise, if $E_{n}+C(E_{n})>Q$, then let $A_{n+1}=A_{n}$ and $B_{n+1}=E_{n}$. The sequence $\{A_{n}\}_{n=1}^{\infty}$, defined by this procedure, converges. Moreover, if we define
\[
D=\lim_{n\to\infty}A_{n}
\]
and define $C=C(D)$, then no other pair of quantities, satisfying the constraints, achieves a greater value for the maximization problem
\[
\max_{0\leq D\leq Q\ :\ C=C(D),\ D+C\leq Q}\left(C+\overline{t}D\right)
\]
than $(D, C)$, so it solves the problem.
\end{theorem}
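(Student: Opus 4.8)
The plan is to recast everything through the single scalar function $g(d) = d + C(d) - Q$, so that the binding constraint $D + C \leq Q$ becomes $g(d) \leq 0$ and the bisection is revealed to be a search for the largest feasible $d$. On the interval $[A_1, B_1]$ one has $1 \leq \frac{I-d}{F} \leq 4$, so writing $m = \frac{I-d}{F}$ gives $g(d) = I - Fm\big(1 + \%(m)\big)$ and $C(d) = Q - Fm\,\%(m)$. The first elementary lemma I would record is that $m \mapsto m\,\%(m)$ is non-decreasing, being a product of non-negative non-decreasing functions; hence $m \mapsto m\big(1+\%(m)\big)$ is non-decreasing as well. Since $d \mapsto m$ is strictly decreasing and affine, this makes $C$ non-decreasing in $d$ and $g$ strictly increasing in $d$ on $[A_1, B_1]$. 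Moreover that substitution is orientation-reversing, so the right continuity of $\%$ in $m$ becomes left continuity of $g$ (and of $C$) in $d$; this is the one regularity property that survives the annual discontinuities of $\%$, and the entire argument rests on it.

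Next I would run the bisection bookkeeping. First I would verify the base case asserted in the statement, namely $g(A_1) \leq 0 < g(B_1)$, directly from the hypotheses $Q > \%(4)\cdot 4F$ and $I \in (F + Q - \%(1)F,\ 4F + Q - \%(4)\cdot 4F]$; the only content is a short case analysis according to whether $A_1 = 0$ or $A_1 = I - 4F$ and whether $B_1 = Q$ or $B_1 = I - F$, using $m\,\%(m) \leq 4\,\%(4)$ for $m \leq 4$. The selection rule then preserves the invariant $g(A_n) \leq 0 < g(B_n)$ for every $n$ by construction, while $B_{n+1} - A_{n+1} = \tfrac12(B_n - A_n)$, so the monotone sequences $\{A_n\}$ and $\{B_n\}$ are squeezed to a common limit $D = \lim_n A_n = \lim_n B_n$, which settles the convergence claim. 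Using $A_n \uparrow D$ together with left continuity and monotonicity of $g$ yields $g(D) = \lim_n g(A_n) \leq 0$, and using $B_n \downarrow D$ with $g(B_n) > 0$ yields $g(d) > g(B_n) > 0$ for every $d > D$. Since for $d < A_1$ one has $C(d) = 0$ and $g(d) = d - Q < 0$, the feasible set $\{d : g(d) \leq 0\}$ is exactly $[0, D]$, and in particular $(D, C(D))$ is feasible.

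Finally I would prove optimality by showing the objective is monotone. Because $\overline{t}\,D = T(I) - T(I-D)$, the quantity to be maximized is $\Phi(d) = C(d) + T(I) - T(I-d)$. The term $T(I) - T(I-d)$ is non-decreasing in $d$ since $T$ is monotone increasing, while $C(d)$ is non-decreasing by the lemma above on $[A_1, B_1]$ and is the constant $0$ on $[0, A_1)$; at $d = A_1$ the passage through $m = 4$ only makes $C$ jump upward, from $0$ to $Q - \%(4)\cdot 4F > 0$, which is harmless for monotonicity. Hence $\Phi$ is non-decreasing on all of $[0, D]$, so $\Phi(d) \leq \Phi(D)$ for every feasible $d$, which is precisely the statement that no admissible pair beats $(D, C(D))$.

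The hard part will be the interplay between the monotonicity of $\%$ and its mere one-sided continuity: the function $g$ genuinely jumps, so I cannot invoke the classical Intermediate Value Theorem and must instead check by hand that the bisection limit lands on a point with $g(D) \leq 0$ rather than overshooting into the infeasible region. This is exactly where the left continuity inherited from the right continuity of $\%$ through the orientation-reversing substitution is indispensable, and it is the delicate point to get right; by contrast the base-case inequalities and the optimality argument become routine once the monotonicity lemma and the reduction $\overline{t}\,D = T(I) - T(I-D)$ are in place.
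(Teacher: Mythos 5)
Your argument follows essentially the same route as the paper's proof: monotonicity of $C(d)$ and of the objective $C(d)+T(I)-T(I-d)$ reduces the problem to finding the largest feasible $d$; left continuity of $C$ (inherited from right continuity of $\%$ through the orientation-reversing substitution $m=\tfrac{I-d}{F}$) guarantees the bisection limit $D$ stays feasible; and the invariant $A_n+C(A_n)\le Q<B_n+C(B_n)$ plus the halving of $B_n-A_n$ gives convergence and the infeasibility of every $d>D$. Your packaging through $g(d)=d+C(d)-Q$ and the explicit lemma that $m\mapsto m\,\%(m)$ is non-decreasing are cleaner than the paper's phrasing, and your treatment of the region $d<A_1$ (where $C\equiv 0$, so $C$ jumps \emph{upward} at $A_1$ and the objective remains monotone on all of $[0,D]$) fills in a step the paper passes over silently.

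The one genuine problem is the base case, which you claim follows ``directly from the hypotheses'' by a short case analysis. It does not, in the case $A_1=I-4F>0$: there $C(A_1)=Q-\%(4)\cdot 4F$, so $A_1+C(A_1)\le Q$ is equivalent to $I\le 4F\bigl(1+\%(4)\bigr)$, whereas the stated hypothesis only gives $I\le 4F+Q-\%(4)\cdot 4F$. These agree only when $Q\le 2\cdot\%(4)\cdot 4F$. For a concrete failure take $F=10{,}000$, $\%(4)=0.05$, $Q=5{,}000$, $I=42{,}500$: all stated hypotheses hold, yet $A_1+C(A_1)=2{,}500+3{,}000=5{,}500>Q$, so the bisection invariant is violated at $n=1$. (Your verification of $B_1+C(B_1)>Q$, by contrast, does go through in both sub-cases.) This is really a defect in the theorem statement's ``so that'' clause; the paper's proof sidesteps it by treating the two initialization inequalities as assumptions and never deriving them. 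You should either do the same or add the missing hypothesis $I\le 4F\bigl(1+\%(4)\bigr)$ to make your base-case paragraph true. Everything downstream of the base case in your argument is sound.
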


\begin{proof}
First, as the function $C(d)$ is increasing in $d$, and $\overline{t}d=T(I)-T(I-d)\geq 0$ is also an increasing function of $d$, the constrained maximization problem is solved for the largest value of $d$ for which all of the constraints are satisfied. Since $\%(m)$ is right continuous on $[1,4]$, it follows that $C(d)$ is left continuous on $[A_{1}, B_{1}]$. As $\{A_{n}\}_{n=1}^{\infty}$ is increasing, $\{B_{n}\}_{n=1}^{\infty}$ is decreasing, and $A_{n}\leq B_{n}$ for all $n\geq 1$, with $B_{n}-A_{n}=\frac{B_{1}-A_{1}}{2^{n-1}}$, it follows that $\{A_{n}\}_{n=1}^{\infty}$ is a Cauchy sequence. Thus, $\lim_{n\to\infty}A_{n}$ exists, so we can define $D=\lim_{n\to\infty}A_{n}$ and $C=C(D)$. By left continuity of $C(d)$, since we have
\[
A_{n}+C(A_{n})\leq Q
\]
for all $n\geq 1$, it follows that
\[
D+C\leq Q.
\]
This proves that the constraints are satisfied for $(D,C)$. As $D+C\leq Q$ and $B_{1}+C(B_{1})>Q$, it follows in particular that $C<B_{1}$. Given any positive number $\varepsilon$ in $(0,B_{1}-D]$, no matter how small, there exists some positive integer $n$ such that $B_{n}-A_{n}=\frac{B_{1}-A_{1}}{2^{n-1}}<\varepsilon$. Then, as $D$ lies in $[A_{n}, B_{n}]$, we have $D+\varepsilon\geq B_{n}$, so
\[
(D+\varepsilon)+C(D+\varepsilon)\geq B_{n}+C(B_{n})>Q.
\]
Hence, $D+\varepsilon$ and $C(D+\varepsilon)$ do not satisfy the constraints. As $\varepsilon>0$ was arbitrary, it follows that no pair $(d,c)$ with $d>D$ can satisfy the constraints. As the quantity to be maximized, $C(d)+\overline{t}d=C(d)+\left(T(I)-T(I-d)\right)$, is increasing in $d$, and $D$ is the largest value of $d$ which satisfies the constraints, it follows that the pair $(D,C)$ solves the maximization problem.
\end{proof}

\emph{Example.} If we perform the bisection procedure on the example from the preceding section, we find that the household receives \$3,489 as a premium tax credit, and \$6,208 in deductions, a far better outcome than what was obtained previously: every extra dollar in refundable tax credits could be could be worth 3, 4, 5, or more dollars in deductions, depending on the details of the household's tax profile. That's because reducing taxable income by a dollar only reduces taxes by the fraction of that dollar that would have been taxed, while a dollar of tax credits yields a whole dollar of tax benefit. The bisection method also offers improvement near $m=1.33$, as the discontinuity in $\%(m)$ at $1.33$ again prevents IRS convergence nearby.

Some Americans have now used the method suggested by this theorem, altered only by the rounding procedures indicated by the IRS on its forms. Those who have spoken to me about their experiences say that their tax returns have been accepted by the IRS. Some have received three thousand dollars of tax credits for their household, or more, instead of zero dollars.

It should be noted that this procedure also works for general tax returns; at each stage of the bisection, we compute all ``above-the-line" deductions which depend on the health insurance deduction based on the proposed midpoint value for the health insurance deduction. We do this in a way that maximizes their sum, and then we compute the premium tax credit based on modified (adjusted gross) income $M$, the difference between the sum of all sources of income and the sum of the above-the-line deductions.

Finally, there are additional minor details to handle with regard to the implementation of these procedures, including rounding and the proper reconciliation of the premium tax credits taken in advance during the year (to pay health insurance premiums right away) with the actual amount of premium tax credits that should be received (these two numbers may differ due to instability in self-employment income from month to month). However, implementation does not offer any new difficulties, and may be handled as in existing IRS publications such as \cite{IRS974}.

\section{Further Questions}

What are perhaps more mathematically interesting than our ability to give a working algorithm are the tasks of explaining in detail why the IRS procedures break down, and investigating whether more sophisticated algorithms than bisection might be needed if tax laws were changed. For the former, the explanation seems to lie in the discontinuities that the function $C(D)$ possesses in general. For the latter, it should be noted that the bisection procedure, when applied to general tax returns, requires a net monotonicity effect from tax deductions in attempting to maximize the tax benefits
\[
\max_{0\leq D_{i}\leq A_{i}\ :\ D=D_{1}+\cdots+D_{8},\ D_{1}+C(D)\leq Q} C(D)+\overline{t}D
\]
over the ``above-the-line" deductions $D_{1}, \dots, D_{8}$ that get caught in the relevant circular tax relationships, where the $A_{i}$ are tax parameters, $Q$ is the full (unsubsidized) cost of annual health insurance premiums for qualified insurance purchased from the county's government exchange, $C$ is a discontinuous, nonlinear premium tax credit function, and $\overline{t}:=\frac{T(I)-T(I-D)}{I-(I-D)}$ is the effective tax rate for the tax benefit achieved from the total above-the-line deduction $D$ against total taxable income $I$ and household tax function $T$.

As a result, a deduction which ``phases in" with a discontinuous jump upwards may fool the bisection method into finding a point which does not maximize tax benefits; there are currently no such above-the-line deductions of this type, but there were as late as 2005, in the old domestic production tax deduction that might affect, for instance, someone who strikes oil in Texas (the new domestic production tax deduction phases in continuously, and with a weak slope, so it presents no difficulty). So this is a question for further research: what if the law for that tax deduction were changed back to the way it was in 2005? In that case, if the bisection method fails, then we should seek an alternative. A naive binned Newton method (not fully taking into account the constraints) appears to the author to fail, albeit only when $m$ near 1.33 in states, such as Florida and Texas, which have not expanded Medicaid. Recall that, according to the Supreme Court ruling \cite{USC}, each state can choose individually whether or not to expand its Medicaid program in the manner prescribed by the Affordable Care Act. So, a little something more in the realm of discontinuous numerical analysis might be needed for those states.

It has also been asked, by individuals interested in politics, whether one might come up with policy proposals which lead to better mathematical properties for the function $C(d)$. If $C(d)$ were smooth of class $C^{\infty}$, with compact support, then superior computational outcomes could be achieved. Congress may want to achieve additional objectives which are incompatible with this, however, and the possibilities for proposals have not yet been fully explored.


%

\end{document}